\setlist[enumerate, 1]{i)}
\newtheorem{theorem}{Theorem}
\newtheorem{lemma}[theorem]{Lemma}
\theoremstyle{remark}
\newtheorem{remark}{Remark}
\newcommand{\R}{\mathbb{R}}
\newcommand{\C}{\mathbb{C}}
\renewcommand{\Re}{\operatorname{Re}}
\newcommand{\id}{\,\mathrm{d}}
\newcommand{\mh}{\mathcal{H}}
\newcommand{\abs}[1]{\lvert#1\rvert}
\newcommand{\norm}[1]{\lVert#1\rVert}
\newcommand{\bvec}[1]{\boldsymbol{#1}}
\newcommand{\dotprod}[2]{
  \bvec{#1}\mkern1mu{\cdot}\mkern1mu\bvec{#2} \,}
\newcommand{\inprodS}[2]{\left ( #1 , #2\right )_{\C^2}}  
\newcommand{\Scalprod}[2]{\left \langle #1 , #2\right \rangle }  
\newcommand{\Dom}[1]{\mathcal{D}( #1)}
\newcommand{\D}{D}
\newcommand{\T}{T}
\newcommand{\Trace}{\gamma}
\newcommand{\Bound}{{\partial \Omega}}
\newcommand{\curv}{\kappa}
\title[Spectral gaps of Dirac operators]{Spectral gaps of Dirac operators describing graphene quantum dots}
\author[Benguria]{Rafael D. Benguria}
\author[Fournais]{S\o ren Fournais}
\author[Stockmeyer]{Edgardo Stockmeyer}
\author[Van Den Bosch]{Hanne Van Den Bosch}
\address{ Rafael D. Benguria, Edgardo Stockmeyer and Hanne Van Den Bosch\\ Instituto de F\'\i sica\\
Pontificia Universidad Cat\'olica de Chile\\
Vicu\~na Mackenna 4860\\
 Santiago 7820436, Chile.}
\address{S\o ren Fournais, Department of Mathematics, Aarhus University, Ny Munkegade 118, DK-8000 Aarhus, Denmark}
\begin{document}

\begin{abstract}
  The two-dimensional Dirac operator describes low-energy excitations
  in graphene. Different choices for the boundary conditions give rise to qualitative differences in the spectrum of the resulting operator. For a family of boundary conditions, we find a lower bound to the spectral gap around zero, proportional to
  $\abs{\Omega}^{-1/2}$, where $\Omega \subset \R^2$ is the bounded region where the Dirac operator acts. This family contains the so-called infinite mass and armchair cases used in the physics literature for the description of graphene quantum dots. 
\end{abstract}

\maketitle

\section{Introduction}
Graphene is a two-dimensional layer of carbon atoms forming a
honeycomb lattice.  Due to its hexagonal symmetry
\cite{wallace,Fefferman}, in absence of external fields, low-energy electronic excitations in an extended graphene sheet behave as Dirac fermions. Their dynamics is described
effectively by the Hamiltonian
\begin{align}\label{h}
  H=
\begin{pmatrix}
                                        T &0\\0&T
                                       \end{pmatrix}\quad\mbox{on}
\quad \mh\oplus \mh,
\end{align}
where $\mh=L^2(\R^2,\C^2)$ and $T$ is the massless two-dimensional
Dirac operator
\[
\T = v_f \hbar( -i \dotprod{\sigma}{}\nabla) \quad\mbox{on}
\quad \mh,
\]
where $v_f\sim 10^6 \, \rm{m}/\rm{s}$ is the Fermi velocity. Here $\bvec
\sigma = (\sigma_1, \sigma_2)$ are the first two Pauli
matrices. Through the orthogonal sum in \eqref{h}, the operator $H$
takes into account contributions from the two inequivalent Dirac
points (or valleys) $\bvec K$ and $\bvec K'$ of the first Brillouin
zone associated to the lattice.  The components of a wavefunction in
$\mh$ describe the electronic density on each of the two triangular
sublattices that constitute the honeycomb lattice.  In many
applications the contributions from the two valleys do not couple and
the description is reduced to the study of the operator $T$ only (see
\cite{CastroNetoetAl} for a review). 

When considering (quasi-)particles confined to some region $\Omega\subset
\R^2$, one should impose boundary conditions that might break the block-diagonal structure of $H$. In the physics literature, much attention has been devoted to the so-called zigzag, armchair and infinite mass boundary conditions.
The choice of a boundary condition influences the spectrum and therefore the transport properties of graphene ribbons and flakes, see for instance \cite{AkhmerovBeenakker, mccannfalko04, WurmRycerzetAl} for theoretical considerations or \cite{RitterLyding} for experimental observations. In particular, the presence of a gap in its spectrum allows to use a graphene device as a semiconductor.

We move on to a brief description of the above mentioned boundary conditions.
Zigzag and armchair boundary conditions emerge from the tight-binding model
and correspond to two different orientations of a straight lattice termination
\cite{AkhmerovBeenakker, mccannfalko04}.  The zigzag boundary
conditions are known to be gapless, having zero as an eigenvalue of infinite multiplicity. From the
mathematical point of view this has been observed in
\cite{schmidt1994} (see also \cite{freitassiegl2014} for the absence
of gaps of certain perturbed zigzag operators). The
associated zero-energy states are well localized close to the
boundary. In contrast, for armchair boundary conditions the presence of a gap has been noted (see
 e.g. \cite{PhysRevB.73.235411,PhysRevB.88.125409, ZhengWangetAl}), and the lowest energy states are rather delocalized. 
Infinite mass boundary conditions, on the other hand, do not arise from the lattice termination.
In fact, they were first studied in 1987 by Berry and Mondragon \cite{BerryMondragon} for the operator $T$. They emerge from the Dirac operator with an effective mass term supported outside $\Omega$, as a limiting case when the mass tends to infinity, see \cite{BerryMondragon, stockmeyerwugalter2016}. In the description of graphene, infinite mass boundary conditions have been also used to model quantum dots or nano-ribbons exhibiting a gap independent of the lattice orientation
\cite{AkhmerovBeenakker,beneventano2014charge, CastroNetoetAl, PonomarenkoetAl}. 
Zigzag and infinite mass boundary conditions do not couple the valleys and thus can be defined for $T$ too. Armchair boundary conditions do mix the valleys and make sense only when considering the full operator $H$.

In the present work, we obtain a lower bound on the gap size for $H$ with armchair or infinite mass boundary conditions in terms of $|\Omega|$. We start our analysis by studying the Dirac operator $T$ on a bounded simply connected domain $\Omega \subset \R^2$. In Theorem~\ref{thm : lower_bound}, we prove the desired estimate for a certain class of boundary conditions including the infinite mass case. The proof of the theorem will be given in Section~\ref{sec : lower_bound}. In Section~\ref{sec : 2-valley}, through an elementary observation, we show that the lower bound for the infinite mass operator $T$ applies equally well to the operator $H$ with armchair boundary conditions. In this section, we also provide some further details on the physically relevant boundary conditions. We complete this introduction with the necessary definitions and the precise statement of the theorem. 
In our proof we follow the 
scheme developed by B\"ar in \cite{Bar1992}. However in his case, B\"ar considers a manifold with curvature but  without boundary. Towards the end of our proof we need to choose 
a trial function $f$. This is similar to what is done in \cite{Bar1992}. In B\"ar's case his choice is dictated by the curvature of the manifold while in our case the choice is related 
to a boundary value problem which depends on our boundary conditions (see equation (\ref{eq : neumann})).

\bigskip
\noindent
{\it Note added in proof:}  When preparing this manuscript, we were not aware of the work of Raulot \cite{Raulot06} and we thank the referee for pointing out this reference to us. The case $n=2$ of \cite[Theorem 1]{Raulot06} is a generalization of our Theorem 1 with $B =1$, to arbitrary manifolds with boundary and so-called chiral boundary conditions. It is interesting to note that these chiral boundary conditions from differential geometry coincide for manifolds $\Omega \subset \R^2$ with the infinite mass boundary conditions known in the literature on graphene. Here, we stress the connection with the physics of graphene and provide a self-contained proof. It does not require sophisticated tools from differential geometry and also works in cases with limited regularity. Through an elementary observation (see Lemma 2) we can treat a family of boundary conditions ($0< B \le 1$).

\subsection{Definitions and the main theorem}
We consider a two-dimensional Dirac operator on a bounded domain $\Omega \subset \R^2$ 
with $C^2$-boundary $\partial \Omega$.
Choosing appropriate units, the Dirac operator acts as the differential expression
\[
T \equiv -i \dotprod{\sigma}{} \nabla  = \sigma_1 (-i \partial_1) 
                                       + 
                                        \sigma_2
                                      (-i \partial_2). 
\]
Here, the Pauli matrices are defined as
\[
\sigma_1 =\begin{pmatrix}
 0&1\\1&0
\end{pmatrix},
\quad
\sigma_2 = \begin{pmatrix}
 0&-i\\i&0
\end{pmatrix},
\quad
\sigma_3 = \begin{pmatrix}
 1&0\\0&-1
\end{pmatrix},
\]
and we use the convention $\dotprod{\sigma}{v} = \sum_{i=1}^{3} v_i \sigma_i $.
We will write $\D_\eta$ for the operator acting as $\T$
on functions in the domain 
\[
\Dom{\D_\eta} \equiv  \{ u \in H^1(\Omega, \C^2)| P_{-,\eta} \Trace u = 0\}.
\]
Here $\Trace$ is the trace operator on the boundary of $\Omega$ and the orthogonal projections $P_{\pm, \eta}$ are defined as
\[
P_{\pm,\eta} = 1/2(1 \pm                           A_\eta),
                                                                       \quad A_\eta = \cos(\eta) \dotprod{\sigma}{t}+ \sin(\eta) \sigma_3,
\]
where $\bvec t$ is  the unit vector tangent to the boundary.
This is the only family of local boundary conditions making $T$ into a symmetric operator on $H^1(\Omega)$.
A priori, $\eta$ can be any real function of $\Bound$, but in the physically relevant cases it is a constant on each connected component of $\Bound$. Infinite mass boundary conditions correspond to $\eta \equiv 0$ or $\eta \equiv \pi$.
If $\eta$ is $C^1$ and $\cos \eta (s) \neq 0$ for all $s \in \Bound$, $D_\eta$ is self-adjoint \cite{BenguriaFournaisetAl}. 
In this case, it follows from the compact embedding of $H^1(\Omega) $ in $ L^2(\Omega)$ that the resolvent of $D_\eta$ is compact.
Thus, its spectrum consists of eigenvalues of finite multiplicity accumulating only at $\pm \infty$.

For constant $\eta$ and simply connected domains,
we obtain the following lower bound for the spectral gap.
\begin{theorem} \label{thm : lower_bound}
 Take $\Omega \subset \R^2$ simply connected with $C^2$-boundary. Let $\eta$ be a constant such that $\cos \eta \neq 0$ and define $\D_\eta$ as before.
 Define $B = \min (\abs{\cos\eta/(1- \sin\eta)}, \abs{(1- \sin\eta)/\cos \eta)}$.
 If $\lambda$ is an eigenvalue of $\D_\eta$, then
 \[
\lambda^2 \geq \frac{2 \pi}{\abs{\Omega}} B^2.
\]
\end{theorem}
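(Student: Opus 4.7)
The plan is to adapt B\"ar's spin-geometric scheme to our planar setting, where the role played by bulk curvature on a closed manifold is replaced by a boundary contribution coming from the projector $P_{-,\eta}$, and where the eventual $2\pi/|\Omega|$ factor should trace back to the Gauss--Bonnet identity $\int_{\partial\Omega}\curv\,ds = 2\pi$ for a simply connected planar domain. First, assuming $\D_\eta u = \lambda u$, I would use $\T^{2} = -\Delta$ as differential operators and integrate by parts twice, feeding in the boundary condition $P_{-,\eta}\Trace u = 0$. Writing $A_\eta$ in terms of the boundary frame and differentiating it along $\partial\Omega$ (so $\partial_s(\sigma\cdot t) = \curv\,\sigma\cdot n$) should produce a Lichnerowicz-type identity
\[
\lambda^{2}\,\|u\|^{2}_{L^{2}(\Omega)} \;=\; \|\nabla u\|^{2}_{L^{2}(\Omega)} \;+\; c(\eta)\!\int_{\partial\Omega}\!\curv\,|\Trace u|^{2}\,ds,
\]
with an explicit constant $c(\eta)$ built from $\cos\eta$ and $\sin\eta$.

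Following B\"ar, I would then improve this identity via a positive scalar conformal-type trial function $f \in C^{2}(\overline{\Omega})$. Re-weighting the energy by $f$, one exchanges $\|\nabla u\|^{2}$ against $\|\nabla(f^{1/2}u)\|^{2}$ plus a term controlled by $|\nabla f|^{2}/f$; a Cauchy--Schwarz estimate on the ensuing cross terms suggests the bound $\lambda^{2}\!\int_\Omega f|u|^{2} \geq B^{2}\!\int_{\partial\Omega}\curv\,f\,|\Trace u|^{2}\,ds$, provided $f$ solves a specific Neumann boundary value problem (the equation advertised as (\ref{eq : neumann}) in the introduction)---roughly $\Delta f = 0$ in $\Omega$ with $\partial_{n}f$ matched to $\curv\,f$ on $\partial\Omega$. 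Simple connectedness of $\Omega$ and $C^{2}$ regularity of the boundary should guarantee a positive solution by standard elliptic theory.

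Feeding this $f$ back into the identity and invoking Gauss--Bonnet on the right-hand side, together with a normalization such as $\int_\Omega f = |\Omega|$ to tie $\int f |u|^{2}$ to $\|u\|^{2}$, should deliver $\lambda^{2} \geq (2\pi B^{2})/|\Omega|$. The asymmetric form $B = \min(|\cos\eta/(1-\sin\eta)|,|(1-\sin\eta)/\cos\eta|)$ should reflect the fact that the boundary integral can be rewritten using either $P_{-,\eta}$ or $P_{+,\eta}$: applying the argument to whichever rewriting gives the better Cauchy--Schwarz constant keeps the stronger bound.

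The main obstacle I anticipate is pinning down the right Neumann problem for $f$ so that three constraints line up at once: the boundary integral collapses to $\int_{\partial\Omega}\curv\,ds$ via Gauss--Bonnet, the gradient cross-terms close off sharply to yield exactly the constant $B^{2}$, and the solution $f$ exists and is strictly positive on a general simply connected $C^{2}$-domain. The boundary identity in the first step is not entirely standard, since the tangential derivative of $A_\eta$ contributes both a $\curv\,\sigma\cdot n$ piece and a term involving $\sin\eta$, so verifying that all of these conspire to produce precisely the factor $B^{2}$ after the trial function is inserted is where the computational heart of the proof lies.
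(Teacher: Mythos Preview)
Your overall architecture—Lichnerowicz identity plus a B\"ar-type trial function that absorbs the boundary curvature—is the right one, and the identity you write in the first step is exactly equation~\eqref{eq : qform} of the paper (with $c(0)=1/2$). But two concrete things in your sketch would not close.

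First, the Neumann problem you propose is the wrong one. You suggest ``roughly $\Delta f = 0$ in $\Omega$ with $\partial_n f$ matched to $\kappa f$ on $\partial\Omega$'', and then hope to normalize $\int_\Omega f = |\Omega|$ so that $\int f|u|^2$ ties back to $\|u\|^2$. This cannot work: the inequality you are aiming for, $\lambda^2 \int_\Omega f|u|^2 \geq B^2 \int_{\partial\Omega}\kappa f |\gamma u|^2\,ds$, compares a \emph{bulk} weighted norm to a \emph{boundary} weighted norm of $u$, and no normalization of $f$ alone will let you cancel them. The paper's mechanism is different: one introduces a modified connection $\widetilde\partial_j = \partial_j - i\alpha\sigma_j - (\bvec\sigma\cdot\nabla f)\sigma_j$ and computes $\sum_j\int_\Omega e^{-2f}|\widetilde\partial_j u|^2 \geq 0$. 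With $\alpha = \lambda/2$ this yields
\[
\frac{\lambda^2}{2}\,\|e^{-f}u\|^2 \;\geq\; -\int_\Omega e^{-2f}(\Delta f)|u|^2 \;+\; \int_{\partial\Omega} e^{-2f}|u|^2\Bigl(\tfrac{\kappa}{2} + \partial_n f\Bigr)\,ds,
\]
so the right Neumann problem is $\Delta f = C$ (a \emph{nonzero} constant) with $\partial_n f = -\kappa/2$. The boundary integral then vanishes identically, and the same weighted norm $\|e^{-f}u\|^2$ appears on both sides and cancels. The compatibility condition for this Neumann problem is $C|\Omega| = \int_{\partial\Omega}\partial_n f = -\tfrac12\int_{\partial\Omega}\kappa = -\pi$, which is where Gauss--Bonnet and the factor $2\pi/|\Omega|$ actually enter.

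Second, carrying general $\eta$ through the whole argument and hoping $B^2$ emerges from a Cauchy--Schwarz step is harder than necessary. The paper instead isolates the $\eta$-dependence in a separate, elementary lemma: if $u\in\Dom{D_\eta}$ with eigenvalue $\lambda_\eta$, one writes $u = v + w$ with $v = \left(\begin{smallmatrix}B&0\\0&1\end{smallmatrix}\right)u \in \Dom{D_0}$ and $w\in\Dom{D_{\pi/2}}$, and a direct computation gives $\lambda_\eta^2\|u\|^2 \geq \|D_0 v\|^2 \geq \lambda_0^2 B^2\|u\|^2$. This reduces everything to $\eta = 0$, where the boundary computation is clean ($u_2 = t u_1$ on $\partial\Omega$), and explains the asymmetric form of $B$ without any optimization over projections.
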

\begin{remark}
In physical units our lower bound for infinite mass
boundary conditions ($B = 1$ in the theorem) gives a gap larger
than $2\sqrt{2\pi} \hbar v_{f} |\Omega|^{-1/2} $. This means that in
order to obtain a gap of $1 \,\mathrm{e} \rm {V}$ one needs a domain with a diameter of about $10
\,\rm{nm}$.
\end{remark}
\begin{remark}
The bound is not sharp, but it is quite good, as a comparison with the case of a disc shows.
When $B = 1$, the lowest eigenvalue for a disc of unit radius is $k_0$, the smallest positive number such that $J_0(k_0) = J_1(k_0)$,
where $J_n$ is the $n$-th Bessel function of the first kind (see \cite{BerryMondragon}). 
Numerically, $k_0 \approx 1.435$, and our lower bound reads
\[
k_0 > \sqrt{2 } \approx 1.414.
\]
It is an open problem to obtain a sharp bound among all bounded two dimensional domains with the type of boundary conditions we consider. 
\end{remark}
\begin{remark}
 The bound obtained by Raulot \cite{Raulot06} is sharp and the case of equality is obtained by a manifold isomorphic to a half sphere
$\mathbb{S}_+^n(r)$ with radius $r$, where $r$ depends on the first eigenvalue of the Dirac operator on the manifold with the boundary conditions he considers. 
\end{remark}

In the differential geometry literature, much attention has been
devoted to lower bounds for the square of Dirac operators on surfaces.
Most of these results deal with closed surfaces
\cite{Friedrich1980,Bar1992}. For Dirac operators on two-dimensional
manifolds with boundaries a less explicit bound has been derived in
\cite{HijaziMontielZhang}.  Our proof uses ideas from \cite{Bar1992}.

\subsection{Notation}\label{notation}
Before going further, we need to fix some notations.
We will consider a fixed domain $\Omega$ with $C^2$-boundary $\Bound$.
We denote by $\bvec{n}(s)$ and $\bvec{t}(s)$ the outward normal and the tangent vector to the boundary at the point $s \in \Bound$.
The orientation of $\bvec t$ is chosen such that $\bvec n, \bvec t$ is positively oriented, so we have
$\dotprod{t}{} \nabla  \bvec t(s):=\partial_s \bvec t (s) = - \curv(s) \bvec n (s)$, where $\curv (s)$ is the curvature of the boundary.
If $\bvec t (s) = (t_1(s), t_2(s))$, we define $t(s) = t_1(s)+ i t_2 (s)$, the tangent vector seen as a number in $\C$.

Functions in $H^1(\Omega)$ have well-defined traces on $\Bound$, and since this will not cause confusion, we use the same notation
for a function and its trace.
In $L^2 (\Omega, \C^2)$, the notations $\Scalprod{\cdot}{\cdot}$ and $\norm{\cdot}$ will be used for the inner product and norm, respectively.

\section{Proof of the theorem} \label{sec : lower_bound}
Before going into the proof of Theorem~\ref{thm : lower_bound}, let us give a heuristic interpretation of the result.
In \cite{Bar1992} the following lower bound for the eigenvalues of the (classical) Dirac operator
on \emph{closed} surfaces $M$ of genus one (surfaces homeomorphic to a sphere) is proved:
\begin{equation} \label{eq : Bar}
 \lambda^2 \geq \frac{4 \pi}{\mathrm{area}(M)}.
\end{equation}

The bound we obtain for an open and simply connected surface $\Omega$
with boundary condition $\eta \equiv 0$, $B= 1$, is

 \[
\lambda^2 \geq \frac{2 \pi}{\abs{\Omega}}.
\]

At least formally, these particular boundary conditions provide the possibility to extend spinors in $\Dom{\D}$
to the \emph{invertible double} $\widetilde \Omega$, which is the closed surface obtained by glueing $\Omega$ to its mirror image.
Details of this construction can be found in \cite[chapter 9]{bookBooss}.
The bottom line is that an eigenspinor $u $ of $\D$ can be extended to an eigenspinor $\widetilde u$ of the extended Dirac operator $\widetilde\D$
by identifying $\widetilde u \approx (u, - u)$. 
Then, the bound of Theorem \ref{thm : lower_bound} follows from B\"ar's bound \eqref{eq : Bar}, since $\mathrm{area}(\widetilde \Omega) = 2 \abs \Omega$.

This argument can be made rigorous by considering closed surfaces $\widetilde \Omega_\epsilon \subset \R^3$ consisting of two copies of $\Omega$ 
in parallel planes with a distance $\epsilon$ between them, joined smoothly by a \emph{ribbon} of width proportional to $\epsilon$.
There is some work involved in computing explicitly the extension of $\D$ 
to the curved ribbon and in checking that eigenspinors can be extended correspondingly.
One has to make sure that the contribution of the curved part to the
Rayleigh quotient tends to zero with $\epsilon$ in order to obtain the
result. If $\Omega$ is not simply connected, we can still perform the doubling
construction, but the resulting closed manifold will be homeomorphic
to a torus or a surface of higher genus.  In principle, this case can
be treated using the results in \cite{AmmannBar} that extend \eqref{eq
: Bar}.

Instead of going through calculations with spinors on curved surfaces
we will use the strategy from \cite{Bar1992} taking care of the
boundary terms. The boundary conditions for constant $\eta \notin \{
0, \pi \}$ do not have the above described {\it doubling property}. However, one can
extend the result for $\eta = 0$ to the general case, as the following
lemma shows.
\begin{lemma}
Take $\D_\eta$ satisfying the hypotheses of Theorem~\ref{thm : lower_bound}, 
and 
$$
B = \min \bigl(\abs{\cos\eta/(1- \sin\eta)}, \abs{(1- \sin\eta)/\cos \eta} \bigr).
$$
 If $\lambda_\eta$ is the eigenvalue of $\D_\eta$ of smallest absolute value, then
 \[
 \lambda_\eta^2 \geq B^2 \lambda_0^2.
 \]
\end{lemma}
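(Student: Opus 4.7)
The plan is to construct from any eigenfunction $u$ of $\D_\eta$ a trial function $v \in \Dom{\D_0}$ by rescaling its two spinor components with a constant diagonal matrix, and then to invoke the min-max principle for $\D_0^2$. The starting point is that both boundary conditions collapse, after component-wise unpacking, to scalar relations of the form $u_1 = c\, \bar t\, u_2$ on $\Bound$, with $t = t_1 + i t_2 \in \C$ and $c$ a real constant depending on $\eta$; a constant diagonal rescaling can then absorb the change in $c$.

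First I would rewrite the boundary condition in components. Using $\dotprod{\sigma}{t} u = (\bar t\, u_2,\, t\, u_1)^\top$, the equation $A_\eta u = u$ on $\Bound$ reduces (for $\cos\eta \neq 0$) to the single scalar relation $\alpha\, u_1 = \beta\, \bar t\, u_2$ on $\Bound$, where $\alpha := \cos\eta$ and $\beta := 1 + \sin\eta$. The elementary identity $\tfrac{\beta}{\alpha} = \tfrac{1+\sin\eta}{\cos\eta} = \tfrac{\cos\eta}{1-\sin\eta}$ shows that the constant $B$ appearing in the lemma equals $\min(|\alpha|/|\beta|,\,|\beta|/|\alpha|)$. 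For $\eta = 0$ the condition becomes $u_1 = \bar t\, u_2$.

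Given $Tu = \lambda_\eta u$ with $u \in \Dom{\D_\eta}$, I would define the trial function $v := M u$ with $M := \mathrm{diag}(\alpha,\beta)$. Since $M$ is a constant matrix, $v \in H^1(\Omega,\C^2)$, and the rewritten boundary condition gives $v_1 = \alpha u_1 = \beta\, \bar t\, u_2 = \bar t\, v_2$, so $v \in \Dom{\D_0}$. Because $T$ is off-diagonal, a direct computation yields $T(Mu) = M' (Tu) = \lambda_\eta M' u$, where $M' := \mathrm{diag}(\beta,\alpha)$ is $M$ with its diagonal entries swapped. Integrating and using the eigenvalue equation,
\[
\frac{\norm{Tv}^2}{\norm{v}^2} \;=\; \lambda_\eta^2 \, \frac{\beta^2 \norm{u_1}^2 + \alpha^2 \norm{u_2}^2}{\alpha^2 \norm{u_1}^2 + \beta^2 \norm{u_2}^2} \;\le\; \lambda_\eta^2 \, \max\!\left(\frac{\beta^2}{\alpha^2},\frac{\alpha^2}{\beta^2}\right) \;=\; \frac{\lambda_\eta^2}{B^2}.
\]
Finally, the min-max principle applied to $\D_0^2$ gives $\lambda_0^2 \le \norm{Tv}^2/\norm{v}^2 \le \lambda_\eta^2/B^2$, which is the claim. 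The argument presents no real obstacle once the rescaling $M$ is identified; the one subtle point is the observation that $T$ interchanges the roles of the two components, which produces $M'$ rather than $M$ in the numerator and is exactly what yields the sharp constant $B^2$.
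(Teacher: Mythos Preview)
Your proof is correct and follows essentially the same idea as the paper's: rescale the two spinor components of an eigenfunction by a constant diagonal matrix so that the result lies in $\Dom{\D_0}$, and then use the variational characterization of $\lambda_0^2$. The paper writes this as a decomposition $u=v+w$ with $v=\mathrm{diag}(B,1)\,u$ (your $M$ up to an overall scalar) and checks that the cross terms are nonnegative, splitting into four cases according to the quadrant of $\eta$; your direct computation of the Rayleigh quotient via the off-diagonal structure of $T$, together with the symmetric parametrization $(\alpha,\beta)=(\cos\eta,\,1+\sin\eta)$, is a bit tidier and handles all admissible $\eta$ at once.
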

\begin{proof}
 Assume $\eta \in (0, \pi/2)$ such that $B = (1- \sin\eta)/\cos\eta \in
 (0,1)$.
 Take  an eigenspinor $u$ of $\D_\eta$ associated to the eigenvalue $\lambda_\eta$.
 Writing out the boundary conditions explicitly, we obtain 
 $u_2 = B t u_1 $ on $ \Bound$.
 Then, we may write $u = v + w$,
 where $v =\left( \begin{smallmatrix}
               B & 0 \\ 0 & 1 
              \end{smallmatrix} \right) u$.              
 This gives $v \in \Dom{\D_0}$, while $w \in \Dom{\D_{\pi/2}}$.
 Now we have
 \[ 
 \lambda_\eta^2 \norm{u}^2 = \norm{\T v + \T w}^2 = \norm{\T v}^2 + \norm{\T w}^2 + 2 \Re \Scalprod{\T v}{\T w}.
  \]
 The last two terms can be combined using the fact that $w$ has only
 its first component nonzero. We get
 \begin{align*}
 \norm{\T w}^2 + & 2 \Re \Scalprod{\T v}{\T w} \\
 & = (1-B)^2 \norm{(-i\partial_1 - \partial_2) u_1}^2 
 + 2 B (1-B) \norm{(-i\partial_1 - \partial_2) u_1}^2\\ & = (1-B^2)\norm{(-i\partial_1 - \partial_2) u_1}^2.
 \end{align*}
 Since $\abs B \leq 1 $ by definition, 
 we have
  \[ 
 \lambda_\eta^2 \norm{u}^2 \geq \norm{\D_0 v}^2 \geq \lambda_0^2 \norm{v}^2 \geq \lambda_0^2 B^2 \norm{u}^2,
  \]
  which is the desired inequality.  
  The other cases are analogous: it suffices to define $v = \left( \begin{smallmatrix}
              - B & 0 \\ 0 & 1 
              \end{smallmatrix} \right) u$  when $\eta$ lies in
            $(\pi/2,\pi)$ or $v = \left( \begin{smallmatrix}
                1 & 0 \\ 0 & \pm B 
             \end{smallmatrix} \right) u$ for $\eta\in (\pi,3\pi/2)$
           and $\eta\in(3\pi/2,2\pi)$.
\end{proof}

\begin{proof}[Proof of Theorem \ref{thm : lower_bound}]
By the previous lemma we can restrict our attention to $\eta = 0$, so to simplify notations, we will write $\D_0 = \D$.
Recall that the Pauli matrices satisfy the (anti-)commutation relations
 \[
 \{\sigma_j, \sigma_k \} = 2\delta_{jk}, \quad 
 [\sigma_j, \sigma_k]  =  2 i \epsilon_{jkl} \sigma_l, \quad j,k,l \in \{ 1,2,3 \},
 \]
where $\delta_{jk}$ is the Kronecker delta and $\epsilon_{jkl}$ is the Levi-Civita symbol, 
which is totally antisymmetric and normalized by $\epsilon_{123} = 1$.
We start by a calculation for $C^1$-spinors $u,v \in \Dom{\D}$
\begin{align*}
(\D u, \D v)
	  &= \sum_{k,j}\int_\Omega \inprodS{\partial_k u}{ \sigma_k \sigma_j \partial_j v} \\
	 & = \sum_{k} \int_\Omega \inprodS{\partial_k u}{\partial_k v} + i \sum_{k,j} \epsilon_{kj3} \int_\Omega \inprodS{\partial_k u}{\sigma_3 \partial_j v}.
\end{align*}
In the second term we can integrate by parts using the antisymmetry of $\epsilon_{kj3}$ 
and introduce the tangent vector at the boundary $\bvec t = (- n_2,
n_1)$. We obtain
\begin{align*}
\sum_{k,j} i \epsilon_{kj3} \int_\Omega \inprodS{\partial_k u}{\sigma_3 \partial_j v} 
	& = \sum_{k,j} i \epsilon_{kj3} \int_\Omega \partial_k \inprodS{u}{\sigma_3 \partial_j v} \\
	& =\sum_{k,j}i \epsilon_{kj3} \int_\Bound \bvec{n}_k \inprodS{ u}{\sigma_3 \partial_j v} \\
	& = i \int_\Bound \inprodS{u}{\sigma_3 \dotprod{t}{}\nabla v}.
\end{align*}
Since only the tangent derivative is involved, this term depends solely on the boundary values of $u$ and $v$.
We can explicitly write out the spinor components and introduce the boundary condition in the form $u_2 = t u_1 $ :
\begin{align*}
 \inprodS{u}{\sigma_3 \dotprod{t}{}\nabla v}
  = u_1^* \dotprod{t}{}\nabla v_1 - u_2^*  \dotprod{t}{}\nabla v_2 = - u_1^* v_1 t^* t' = - i u_1^* v_1 \curv(s) 
\end{align*}
In the last equality, we used $\partial_s \bvec t (s) = -\curv(s)
\bvec n (s)$ (see Subsection \ref{notation}).
By density,
\begin{equation} \label{eq : qform}
 (\D u, \D v) = ( \nabla u,  \nabla v) + \frac{1}{2}\int_\Bound \inprodS{u}{v}(s) \curv(s) \id s 
\end{equation}
holds for all $u , v \in \Dom{\D}$.

For a real constant $\alpha$ and a real $C^1$-function $f$
we define a modified connection
\[
\widetilde \partial_j = \partial_j -i \alpha \sigma_j  - \sigma_f \sigma_j , 
\]
where $\sigma_f := \dotprod{\sigma}{}\nabla f$.
For spinor fields $u, v$ we compute the product
\begin{align*}
 \sum_j \inprodS{\widetilde \partial_j u}{\widetilde \partial_j v}
	    =& \sum_j \Bigl( \inprodS{\partial_j u}{ \partial_j v} 
	    +  \alpha^2 \inprodS{\sigma_ju}{\sigma_jv} +  \inprodS{\sigma_f \sigma_j u}{\sigma_f \sigma_j v}
	    \\
	    &\quad - \inprodS{\partial_j u}{(i \alpha \sigma_j +  \sigma_f \sigma_j   )v}
		- \inprodS{ (i \alpha \sigma_j +  \sigma_f \sigma_j  ) u}{\partial_j v} \\
	    &\quad  +i\alpha  \inprodS{\sigma_j u}{ \sigma_f \sigma_j v}
		-i  \alpha \inprodS{\sigma_f \sigma_j u}{\sigma_j v}  \Bigr)  \\
	    =& \inprodS{ \nabla u}{  \nabla v}  + (2 \alpha^2 + 2 \abs{\nabla f}^2) \inprodS{u}{v} \\
		&\quad - \alpha (\inprodS{Du}{v} + \inprodS{u}{Dv}) \\
		& \quad - \sum_j \bigl(\inprodS{\partial_ju}{\sigma_f \sigma_j v} + \inprodS{ \sigma_f \sigma_j u}{\partial_j v} \bigr)	    .
\end{align*}
By the anti-commutation relations, we obtain
\[
\sum_j \inprodS{ \sigma_f \sigma_j u}{\partial_j v} = - i \inprodS{ u}{\sigma_f D v} + 2 \sum_j (\partial_j f) \inprodS{u}{\partial_j v},
\]
so 
\begin{align*}
 \sum_j \inprodS{\widetilde \partial_j u}{\widetilde \partial_j v}
	       =& \inprodS{ \nabla u}{  \nabla v}  + (2 \alpha^2 + 2 \abs{\nabla f}^2) \inprodS{u}{v} \\
		&\quad - \alpha (\inprodS{Du}{v} + \inprodS{u}{Dv}) \\
		& \quad  - \sum_j \bigl(\inprodS{\partial_ju}{\sigma_f \sigma_j v} + 2 (\partial_j f) \inprodS{u}{\partial_j v} \bigr)+ i \inprodS{ u}{\sigma_f D v}	    .
\end{align*}

We are interested in the integral over $\Omega$ of the above quantity with the weight $e^{-2 f}$.
If $u, v  \in \Dom{D^2}$, then we obtain using \eqref{eq : qform}
\begin{align*}
 \Scalprod{ e^{- 2  f}u}{D^2 v} 
	  &= \Scalprod {D e^{-2  f}u}{D v} \\
	  & = \Scalprod{ \nabla e^{- 2 f}u}{ \nabla v} + \int_\Bound  e^{- 2 f} \frac{\curv}{2}\inprodS{u}{ v} \\
	  &= \int_\Omega e^{- 2 f} \Bigl(\inprodS{ \nabla u}{\nabla v} - 2 \sum_j (\partial_j f) \inprodS{ u}{\partial_j v}\Bigr)
		  + \int_\Bound  e^{- 2 f} \frac{\curv}{2}\inprodS{u}{ v}.
\end{align*}
And by integration by parts 
\begin{align*}
\sum_j \int_\Omega e^{- 2 f} \inprodS{\partial_ju}{\sigma_f \sigma_j v} 
	  =&   - i \Scalprod{e^{- 2 f}u}{\sigma_f D v} + \Scalprod{(- \Delta f+ 2 \abs{\nabla f}^2 ) e^{- 2 f}u}{v} 
		\\
	&	+ \int_\Bound e^{- 2 f} \inprodS{u}{\sigma_f \dotprod{\sigma}{n} v}  .   
\end{align*}
Inserting these identities, we obtain
 \begin{align*}
 \sum_j \int_\Omega e^{- 2 f }& \inprodS{\widetilde \partial_j u}{\widetilde \partial_jv}
    = 
	    \Scalprod{e^{- 2 f} u}{D^2 v} + 2 \sum_j \Scalprod{e^{- 2 f} u}{  (\partial_j f ) \partial_j v}\\
	 &    +  \Scalprod{e^{- 2 f}(2 \alpha^2+ 2 \abs{\nabla f}^2)u}{v}  
-\alpha \Scalprod{e^{- 2 f} u}{ D v}   -\alpha \Scalprod{e^{- 2 f} D u}{v}\\
	 &  + 2 i  \Scalprod{e^{- 2 f} u}{\sigma_f D v}  + \Scalprod{e^{- 2 f} u}{(\Delta f -2\abs{\nabla f}^2) v} \\
	& - 2 \sum_j  \Scalprod{e^{- 2 f} u}{(\partial_j f )\partial_j v}
	          \\
	&+ \int_\Bound  e^{- 2 f} \bigl( - \frac{\curv}{2}\inprodS{u}{ v} - \inprodS{u}{\sigma_f \dotprod{\sigma}{n} v}\bigr).
 \end{align*}
 Note that the four terms containing first derivatives of $f$ cancel exactly.
Applying this identity with $v = u$ and $D u = \lambda u$, it reduces to
\begin{align*}
\sum_j  \int_\Omega e^{- 2 f} \inprodS{\widetilde \partial_j u}{\widetilde \partial_ju}
    = &
	    (\lambda^2 + 2 \alpha^2- 2\alpha \lambda) \norm{e^{-f}u}^2  
	      \\
	  &     + 2 i  \lambda \Scalprod{ u}{\sigma_f u} 
	    + \Scalprod{ e^{-2f} u }{\Delta fu}
		      \\
      &+ \int_\Bound  e^{- 2 f} \bigl( - \frac{\curv}{2}\inprodS{u}{u}-  \inprodS{u}{\sigma_f \dotprod{\sigma}{n} u}\bigr).    
 \end{align*}
Now we use the fact that the left hand side is real and nonnegative, and choose $\alpha = \lambda/2$ in order to minimize the coefficient of the first term.
We obtain the inequality
\begin{align*}
 0 \leq \frac{\lambda^2}{2} \norm{e^{-f}u}^2 +   \Scalprod{ e^{-2f} u }{(\Delta f)u}
 + \int_\Bound e^{-2  f}  \bigl( 
	     - \frac{\curv}{2}\inprodS{u}{u} - \Re  \inprodS{u}{\sigma_f \dotprod{\sigma}{n} u}\bigr) .
\end{align*}
Using anti-commutation relations,
$
\Re \inprodS{u}{\sigma_f \dotprod{\sigma}{n} u} = (\dotprod{n}{}\nabla f)  \inprodS{u}{u},
$
so we obtain 
\[
\frac{\lambda^2}{2} \norm{e^{- f} u}^2 \geq - \Scalprod{ e^{-2f} u }{(\Delta f)u}
+ \int_\Bound e^{-2 f}\inprodS{u}{u}( \frac{\curv}{2} + \dotprod{n}{}\nabla f).
\]
This suggests to take $f$ solving, for some $C\in\R$,
\begin{equation}
 \label{eq : neumann}
 \left\{ \begin{array}{ll}
         \Delta f = C & \text{ in } \Omega, \\
         \dotprod{n}{}\nabla f = - \kappa/2  & \text{ in } \Bound.
        \end{array} \right. 
\end{equation}
To see that such an $f$ exists, set $f_0(x) = C \abs{x}^2/4$.
By \cite[Theorem 3.40, p138]{bookFolland}, we can find $f_h$ satisfying
\[
\left\{ \begin{array}{ll}
         \Delta f_h = 0, & \text{ in } \Omega, \\
         \dotprod{n}{}\nabla f_h = - \kappa/2 - \dotprod{n}{}\nabla f_0,  & \text{ in } \Bound,
        \end{array} \right.
\]
provided $\int_\Bound ( - \kappa/2 - \dotprod{n}{}\nabla f_0)= 0$.
Since $\Omega$ is simply connected, $\int_\Bound \curv = 2 \pi$. On the other hand, $\int_\Bound \dotprod{n}{}\nabla f_0 = \int_\Omega \Delta f_0 = C \abs{\Omega}$.
So with the choice $C = -\pi / \abs{\Omega}$, $f_h+f_0$ satisfies \eqref{eq : neumann}.
The final result is
\[
\lambda^2 \geq  - 2 C = \frac{2 \pi}{\abs{\Omega}}.
\]

\end{proof}

\section{Application to the two-valley description of graphene}\label{sec : 2-valley}
We now apply our results to the description of electronic excitations in graphene as a
four-component spinor with
\[
H = \begin{pmatrix} T & 0 \\ 0 & T\end{pmatrix}.
\]
If the boundary conditions are local and uniform, the four-spinors should
fulfill 
\begin{align*}
  P_-(A)\psi :=\frac{1}{2}(1-A)\psi=0 \quad \text{on } \Bound.
\end{align*}
Here $A$ is a unitary matrix that belongs to a
four-parameter family, see \cite{AkhmerovBeenakker} for its explicit form.
For simplicity we will restrict our attention to the boundary
conditions  most commonly used in the physics literature, following the notations of \cite{AkhmerovBeenakker}.

\paragraph{ \bf Zigzag boundary conditions}
Zigzag boundary conditions arise from the tight-binding model when the honeycomb lattice is terminated in a direction perpendicular to the bonds, see Figure~\ref{fig: lattice}.
In this case, 
\[
A =  \begin{pmatrix}
     \sigma_3 & 0 \\ 0 & -\sigma_3
    \end{pmatrix}.
\]
Thus, these boundary conditions do not mix the two valleys.
We obtain two copies of $D_{\pi/2}$, which is not self-adjoint on
$H^1(\Omega,\C^2)$ and has zero as an eigenvalue of infinite
multiplicity \cite{schmidt1994}.

\begin{figure} 
\includegraphics[width = 0.35 \textwidth]{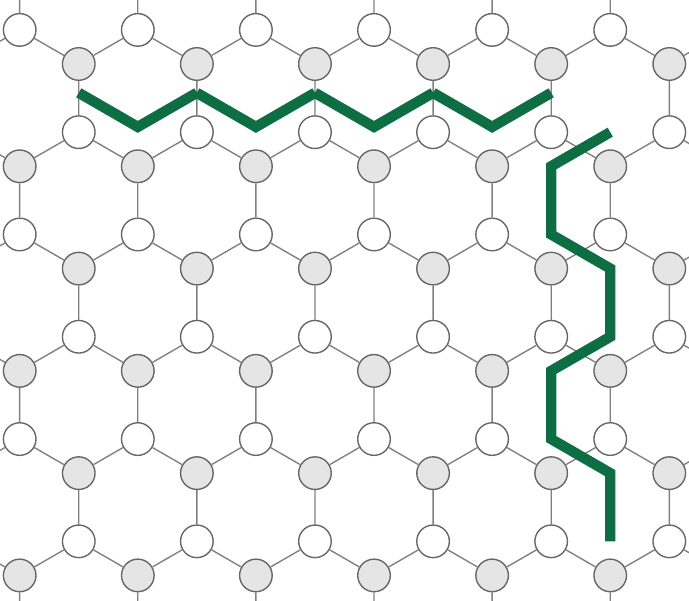}
\caption{A honeycomb lattice, where the gray and white dots represent carbon atoms on each of the two triangular sublattices. The thick lines indicate the zigzag boundary (above) and armchair boundary (right).}\label{fig: lattice}
\end{figure}

\paragraph{ \bf Infinite Mass boundary conditions}
These boundary conditions have been used as an effective model describing a graphene quantum dot or nanoribbon, when a detailed microscopical description of the boundary is lacking \cite{beneventano2014charge, CastroNetoetAl, PonomarenkoetAl}. 
In addition, infinite mass boundary conditions are obtained as a limiting case of $T$ acting on $L^2(\R^2)$ with a mass term $M(x) = M(1 - \chi_\Omega(x))$ when the mass tends to infinity. Physically, such a term is the continuum limit of a staggered potential (opposite signs on both sublattices) and several mechanisms to realize this in practice have been proposed \cite{GiovannettiKhomyakovetAl,SubramaniamLibischetAl, ZhouGweonetAl}. 

The matrix giving the boundary conditions here is 
\[
A= \begin{pmatrix}
 \dotprod{\sigma}{t} & 0 \\ 0 &- \dotprod{\sigma}{t} 
\end{pmatrix}.
\]
This does not mix the valleys and gives a block diagonal operator with $\D_0$ and $\D_\pi$ on the diagonal.
Therefore, the operator is self-adjoint with domain
$\Dom{D_0}\oplus\Dom{D_\pi}\subset  H^1(\Omega, \C^4)$ and the estimate of Theorem \ref{thm : lower_bound} holds.

\paragraph{ \bf Armchair boundary conditions}
Armchair boundary conditions also arise from the termination of a lattice, when the direction of the boundary is parallel to the bonds (see Figure~ \ref{fig: lattice}). It has been noted in some particular cases  that these boundary conditions give rise to a gap in the spectrum around zero (see for instance \cite{PhysRevB.73.235411,PhysRevB.88.125409, ZhengWangetAl}).
The boundary conditions are determined by 
\[
A =  \begin{pmatrix}
     0 & \nu^* \dotprod{\sigma}{t} \\ \nu \dotprod{\sigma}{t} & 0 
    \end{pmatrix}
\]
where $\abs{\nu}= 1$. We show how these boundary conditions can be brought into a block-diagonal form in order to apply our theorem.
Using the unitary transformation
\[
U_\nu = \begin{pmatrix}
     \nu 1_{\C^2} & 0 \\ 0 & 1_{\C^2} 
    \end{pmatrix}
\]
we can restrict our attention to the case $\nu = 1 $.
Consider the unitary transformation 
\[
U_p = \begin{pmatrix}
       1&0&0&0 \\ 0&0&0&1 \\ 0&0&1&0 \\ 0&1&0&0
      \end{pmatrix}
\]
corresponding  to a permutation of the second and fourth spinor
components.  This transforms the boundary conditions in 
\[
\widetilde A = U_p A U_p^\ast = \begin{pmatrix}
                                \dotprod{\sigma}{t} & 0 \\ 0 & \dotprod{\sigma}{t}
                               \end{pmatrix},
\]
and the Hamiltonian as
\[
\widetilde H = U_p H U_p^\ast  = \begin{pmatrix}
 0 & T \\ T & 0 
\end{pmatrix}.
\]
After this transformation, a four-spinor $\psi$ in the domain of $\widetilde H$ can be written as 
$\psi = \left(\begin{smallmatrix}            u_1 \\ u_2         \end{smallmatrix}\right)$ 
with $u_1, u_2 \in \Dom{D_0}$.
A short calculation shows that the same holds for the adjoint:
$\phi \in \Dom{\widetilde H^*}$ if and only if $\phi = \left(\begin{smallmatrix}            v_1 \\ v_2         \end{smallmatrix}\right)$ 
with $v_1, v_2 \in \Dom{D_0^*} = \Dom{D_0}$.
Thus, $H$ is self-adjoint on a domain included in $ H^1(\Omega, \C^4)$.
Furthermore,
\[
\norm{\widetilde{H}  \psi}^2 
= \norm{\begin{pmatrix}            D_0 u_2 \\ D_0 u_1         \end{pmatrix}}^2 
\geq \frac{2 \pi}{\abs \Omega} \norm{\psi}^2.
\]
In other words, the estimate of Theorem \ref{thm : lower_bound} holds
in this case as well.
\bigskip

\noindent
{\bf Acknowledgments.}
This work has  been
supported by the Iniciativa Cient\'ifica Milenio (Chile) through the
Millenium Nucleus RC–120002 ``F\'isica Matem\'atica'' . 
R.B.  has been supported by Fondecyt (Chile) Projects \# 112--0836 and
\#114--1155. S.F. acknowledges partial support from a Sapere Aude grant
from the Danish Councils for Independent Research, Grant number
DFF--4181-00221. E.S has been partially funded by Fondecyt (Chile)
project \# 114--1008. H. VDB. acknowledges support from Conicyt
(Chile) through CONICYT--PCHA/Doctorado Nacional/2014.  This work was
carried out while S.F. was invited professor at Pontificia Universidad
Cat\'olica de Chile.

\end{document}